\documentclass[10pt]{article}
\usepackage{mathrsfs}
\usepackage{amsmath,amsfonts}
\usepackage{amssymb,amscd,amsthm}
\usepackage[all]{xy}
\usepackage[dvips]{graphicx}
\usepackage{verbatim}
\usepackage[perpage,symbol*]{footmisc}
\usepackage[justification=centering]{caption}
\usepackage{longtable}
\usepackage{multirow}
\usepackage{slashbox}
\usepackage[figuresright]{rotating}
\usepackage{makecell}
\usepackage{rotating}
\usepackage[numbers]{natbib}
\usepackage{tikz}

\setlength{\hoffset}{-2cm} \setlength{\textwidth}{16cm}

\newtheorem{lemma}{\qquad\textbf{Lemma}}
\newtheorem{theorem}{\qquad\textbf{Theorem}}

\newtheorem{corollary}{\qquad\textbf{Corollary}}

\newtheorem{example}{\qquad\textbf{Example}}

\newtheorem*{t1}{\qquad\textbf{Riemann-Roch Theorem}}
\newtheorem*{t2}{\qquad\textbf{Hermitian construction}}
\newtheorem*{t3}{\qquad\textbf{Upper Hasse-Weil bound}}

\begin{document}
\baselineskip 17pt
\title{\Large\bf  Self-orthogonal and self-dual codes from maximal curves}
\author{\large  Puyin Wang \quad\quad Jinquan Luo*}\footnotetext{The authors are with School of Mathematics
and Statistics \& Hubei Key Laboratory of Mathematical Sciences, Central China Normal University, Wuhan China 430079.\\
 E-mail: p.wang98@qq.com(P.Wang), luojinquan@ccnu.edu.cn(J.Luo)}
\date{}
\maketitle
{\bf Abstract}:
    In the field of algebraic geometric codes(AG codes), the characterization of dual codes has long been a challenging problem which relies on differentials. In
    this paper, we provide some descriptions for certain differentials utilizing algebraic structure of finite fields and geometric properties of
    algebraic curves. Moreover, we construct self-orthogonal and self-dual codes with parameters $[n, k, d]_{q^2}$ satisfying $k + d$ is close to
    $n$. Additionally, quantum codes with large minimum distance are also constructed.

{\bf Key words}: Algebraic geometry code,  Hermitian self-orthogonal code, Euclidian self-orthogonal code, Euclidian self-dual code, Quantum code.

\section{Introduction}
\qquad Let $q$ be a power of a prime $p$, and let $\mathbb{F}_q$ denote the finite field with $q$ elements. An $[n,k,d]$ linear code $C$
over $\mathbb{F}_q$ is a $k$-dimensional subspace of $\mathbb{F}_q^n$ with minimum distance $d.$  For a linear code $C$, the Euclidean dual $C^\perp$ is defined as
$$C^\perp=\left\{(c_1^\prime,c_2^\prime,\ldots,c_n^\prime)\in \mathbb{F}_q^n\mid\sum_{i=1}^nc_i'c_i=0,\:\forall\:(c_1,c_2,\ldots,c_n)\in C\right\}.$$
The code $C$ is Euclidean self-orthogonal if it satisfies $C\subseteq C^\perp$. It is Euclidean self-dual if  $C=C^\perp$.

Similarly, for a linear code $C$ over $\mathbb{F}_{q^2}$,  the Hermitian dual $C^{\perp_H}$ is defined as
$$C^{\perp_H}=\left\{(c_1^\prime,c_2^\prime,\ldots,c_n^\prime)\in \mathbb{F}_{q^2}^n\mid\sum_{i=1}^n{c_i^\prime}^qc_i=0,\:\forall\:(c_1,c_2,\ldots,c_n)\in C\right\}.$$
The code $C$ is  Hermitian self-orthogonal if it satisfies $C\subseteq C^{\perp_H}$. It is  Hermitian self-dual if $C=C^{\perp_H}$.

Euclidean self-orthogonal codes not only possess special structures but also have a wide range of applications. For instance, they can be applied to construct quantum codes \cite{CRSS}. Moreover,
self-dual code, as a special type of self-orthogonal code, holds many  theoretical significances. They can be applied to construct some $t$-designs, such
as \cite{AM} and \cite{GB}. Besides, they are closely connected to lattices and modular forms \cite{BBH}, \cite{H}, \cite{Sv}.  Furthermore, self-dual codes have applications in secret sharing. In \cite{CDGU}, linear secret sharing schemes with specific access structure are constructed from self-dual codes.

In recent years, significant progress has been made in the construction of Euclidean self-orthogonal codes. For instance, \cite{ZKL} explores their construction through cyclic codes. In \cite{SWKS}, Euclidean self-orthogonal codes are constructed via non-unital rings, and \cite{ZLD} presents the construction of several Euclidean self-orthogonal codes in a generic way.

Euclidean self-dual codes have received great attention in recent years. In \cite{JX}, a generic criteria is proposed to verify a generalized Reed-Solomon(GRS) code being self-dual. Later, it is generalized to extended GRS case in \cite{HY}. In \cite{ZF}, a method to construct longer Euclidean MDS self-dual codes from shorter length is proposed, which is generalized in \cite{XFL}.  Additionally, several classes of MDS self-dual codes are constructed via multiplicative subgroups or additive subgroups in finite fields  (\cite{FZXF}, \cite{FLLL},  \cite{SYS}).  In particular, a large class of MDS self-dual codes are derived in \cite{WLZ} via union of subsets over two multiplicative subgroups in finite fields.  \cite{SYS}. Also,  \cite{FL2} focuses on the construction of self-dual 2-quasi negacyclic codes.

Self-orthogonal codes with respect to the Hermitian inner product play a crucial role in the construction of quantum codes (\cite{AK}, \cite{KKKS}). It has been extensively
studied in recent years (e.g., \cite{GLLS}). Notably, quantum MDS codes have been constructed using GRS codes, with the Hermitian construction proposed in
\cite{AK} in recent years, such as \cite{JLLX}, \cite{LLGS}, \cite{WT}.

\begin{t2}
If there exists an $[n,k,d]_{q^2}$ linear code $C$ with $C\subseteq C^{\perp_H}$, then there exist an $[[n,n-2k,\geqslant d^{\perp}]]_q$ quantum code where $d^{\perp}$ is the minimum distance of $C^{\perp_H}$.\hfill$\square$
\end{t2}

Using algebraic curves over finite fields, Goppa introduced an influential class of error-correcting codes known as algebraic-geometric (AG) codes \cite{G}.
 By leveraging the geometric properties of algebraic curves, AG code achieves higher code rate while preserving larger
minimum distance for the same code length. This allows them to exceed the Gilbert-Varshamov (GV) bound, a threshold that classical codes often struggle to reach. By effectively
balancing code rate and minimum distance, AG codes deliver superior error-correction performance for a given code length. These properties can be applied in several communication scenarios such as optimal locally repairable codes \cite{MX}.

By leveraging the geometric properties of algebraic curves, such as choosing appropriate set of rational points, we can get a series of self-orthogonal algebraic geometry codes. For instance, certain Hermitian self-orthogonal codes with strong performance are derived from algebraic geometry codes \cite{JLLX}. In fact, by employing
appropriate differential, we can characterize  algebraic geometry codes are Euclidean or Hermitian self-orthogonal. Related work in recent years can be found in
\cite{HMMF},\cite{J}.

In this paper, we construct some self-orthogonal codes via two kinds of maximal curves. The parameters of these codes are also determined/estimated. The paper is organized as follows. In section $2$,
we introduce some preliminary knowledge on AG codes which will be useful for the remaining paper. In section $3$, we construct two classes of self-orthogonal AG codes from two
classes maximal curves. And we give some quantum codes with large minimal distances. In section $4$, we will make a conclusion and some further problems will be proposed.

\section{Preliminary}

\qquad In this section, we introduce some basic notations and auxiliary results related to algebraic curves and algebraic geometry codes. Further details are available in \cite{S}.

\subsection{Places, divisors and Riemann-Roch space}

\qquad Let $\mathbb{F}_q$ be the finite field with $q$ elements where $q$ is a power of prime number $p$. Let $\chi$ be a be a smooth, projective, absolutely irreducible curve of genus $g$ defined over
$\mathbb{F}_q$. Denote by $\mathbb{F}_q(\chi)$ the function field of $\chi$. Let $\mathbb{P}_{\mathbb{F}_q(\chi)}$ be the set of places of $\mathbb{F}_q(\chi)$. Let $v_P$ be the
normalized discrete valuation corresponding to a point $P$ of $\chi$. Denote by $\#S$ the number of elements in a finite set $S$. Then $\chi$ is said to be maximal if it attains the
upper Hasse-Weil bound.

\begin{t3}
    $\#\mathbb{P}_{\mathbb{F}_q(\chi)}\leqslant q+1+2g\sqrt{q}$.\hfill$\square$
\end{t3}

For a divisor $D=\sum\limits_{P\in\mathbb{P}_{\mathbb{F}_q(\chi)}}n_PP$ where $n_P$ is integer and almost all $n_P$ are $0$, the support of $D$ is ${\rm
supp}(D)=\{P\in\mathbb{P}_{\mathbb{F}_q(\chi)}\mid n_P\neq0\}$. Define the degree of $D$ to be
$\deg(D)=\sum\limits_{P\in {\rm supp}(D)}n_P$.

Let $\Omega$ be the differential space of $\chi$. We define
$$\Omega(G)=\{\omega\in\Omega\setminus\{0\}\mid\mathrm{div}(\omega)\geqslant G\},$$
where $(\omega)$ is the canonical divisor corresponding to $\omega$. In fact, all canonical divisors have degree $2g-2$. Define the principal divisor of $x \in \mathbb{F}_q(\chi)$ as
following:
$$(x):=\sum\limits_{P \in {\mathbb{P}_F}} {{v_P}(x)P}.$$
The Riemann-Roch space associated to $D$ is
$$\mathscr{L}(D)=\{x\in \mathbb{F}_q(\chi)\mid(x)+D\geqslant 0\}\cup\{0\}.$$
We denote the dimension of $\mathscr{L}(D)$ by $l(D)$.

The connection between the above concepts is elaborated by the following result.

\begin{t1}
    Let $W$ be a canonical divisor of $\mathbb{F}_q(\chi)$. Then for each divisor $G$ of $\mathbb{F}_q(\chi)$, we have
    $$l(G)=\deg(G)+1-g+l(W-G).$$
Especially, we have $l(G)=\deg(G)+1-g$ if $\deg(G)> 2g-2$.\hfill$\square$
\end{t1}

\subsection{AG codes}

\qquad  Suppose $P_{1},\ldots,P_{n}$ are pairwise distinct rational places of $\mathbb{F}_q(\chi)$. $D=P_{1}+\ldots+P_{n}$ and $G$ is a divisor of $\mathbb{F}_q(\chi)/\mathbb{F}_q$
such that ${\rm supp}(G)\cap{\rm supp}(D)=\emptyset$.

Consider the evaluation map $\mathrm{ev}_{D}:\mathscr{L}(G)\to \mathbb{F}_q^{n}$ given by
$$ev_{D}(f)=(f(P_{1}),\ldots,f(P_{n}))\in \mathbb{F}_q^{n}$$
under the isomorphism between the residue fields of rational places and $\mathbb{F}_q$. The image of $\mathscr{L}(G)$ under $ev_{D}$ is the algebraic geometric code (or AG code
shortly) $C_\mathscr{L}(D,G)$. The Euclidean dual code of $C_\mathscr{L}(D,G)$ is
$$C_{\Omega}(D,G)=\{(\mathrm{res}_{P_1}(\omega),\ldots,\mathrm{res}_{P_n}(\omega))\mid\omega\in\Omega_{F}(G-D)\}.$$

\begin{theorem}{\rm(\cite{S} Theorems 2.2.2  and  2.2.7)}\label{AG}
    $C_{\mathscr{L}}(D,G)$ is an $[n,k,d]$ code with parameters
    \begin{equation}\notag
        k=l(G)-l(G-D), d\geqslant n-\deg(G).
    \end{equation}
   Its dual code $C_{\Omega}(D,G)$ is an $[n,k^{\prime},d^{\prime}]$ code with parameters
    \begin{equation}
        k^{\prime}=n-l(G)+l(G-D), d^{\prime}\geqslant\deg(G)-(2g-2).\tag*{\qed}
    \end{equation}
\end{theorem}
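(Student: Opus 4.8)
The plan is to treat the two codes separately, obtaining each dimension from a rank--nullity argument and each minimum distance from a degree estimate on divisors; the only deep input is the residue theorem, which governs the duality.

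For $C_{\mathscr{L}}(D,G)$, I would first note that $\mathrm{ev}_D:\mathscr{L}(G)\to\F_q^n$ is $\F_q$-linear with image exactly $C_{\mathscr{L}}(D,G)$. A function $f\in\mathscr{L}(G)$ lies in the kernel iff $f(P_1)=\cdots=f(P_n)=0$, which (since $\mathrm{supp}(G)\cap\mathrm{supp}(D)=\emptyset$) is equivalent to $(f)\geq D-G$, i.e. $f\in\mathscr{L}(G-D)$. Rank--nullity then gives $k=l(G)-l(G-D)$. For the distance, take a nonzero codeword $\mathrm{ev}_D(f)$ of weight $w$; then $f$ vanishes on a set $S$ of $n-w$ places, so $f\in\mathscr{L}(G-\sum_{P\in S}P)$. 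As $f\neq0$ this Riemann--Roch space is nonzero, forcing $\deg(G)-(n-w)\geq0$, hence $w\geq n-\deg(G)$ and $d\geq n-\deg(G)$.

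For the dual, the decisive step is to show $C_\Omega(D,G)=C_{\mathscr{L}}(D,G)^{\perp}$ by the residue theorem. For $f\in\mathscr{L}(G)$ and $\omega\in\Omega_F(G-D)$ the Euclidean pairing of the two codewords is $\sum_i f(P_i)\,\mathrm{res}_{P_i}(\omega)=\sum_i\mathrm{res}_{P_i}(f\omega)$. Since $(f)\geq-G$ and $(\omega)\geq G-D$, we get $(f\omega)\geq-D$, so $f\omega$ has at worst simple poles, all supported on $\{P_1,\dots,P_n\}$; the residue theorem forces the total residue sum to vanish, so the pairing is $0$ and $C_\Omega(D,G)\subseteq C_{\mathscr{L}}(D,G)^{\perp}$. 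Computing $\dim C_\Omega(D,G)$ as $i(G-D)-i(G)$ via the index of specialty $\dim\Omega_F(A)=l(A)-\deg(A)+g-1$ (a consequence of Riemann--Roch) yields $k'=n-l(G)+l(G-D)=n-k$, which matches $\dim C_{\mathscr{L}}(D,G)^{\perp}$ and forces equality.

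Finally, for the distance $d'$, take a nonzero $\omega\in\Omega_F(G-D)$ with codeword weight $w$, and set $A=\sum P_i$ over the $n-w$ indices with $\mathrm{res}_{P_i}(\omega)=0$. The technical crux, which I expect to be the main obstacle, is to upgrade ``residue zero'' to a valuation inequality: at each $P_i$ one has $v_{P_i}(\omega)\geq-1$, and since a genuine simple pole has nonzero residue, vanishing of the residue forces $v_{P_i}(\omega)\geq0$ there. Combined with $(\omega)\geq G-D$ off the residue-zero places, this gives $\omega\in\Omega_F(G-D+A)$. As $\omega\neq0$, the degree of $G-D+A$ cannot exceed $\deg((\omega))=2g-2$, so $\deg(G)-n+(n-w)\leq 2g-2$, i.e. $w\geq\deg(G)-(2g-2)$, establishing $d'\geq\deg(G)-(2g-2)$.
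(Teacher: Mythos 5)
Your proposal is correct and complete: the rank--nullity argument for $k$, the degree estimates for $d$ and $d'$, the residue-theorem pairing for the inclusion $C_\Omega(D,G)\subseteq C_{\mathscr{L}}(D,G)^{\perp}$, and the index-of-specialty dimension count forcing equality are exactly the standard arguments. The paper itself states this theorem without proof, citing Theorems 2.2.2 and 2.2.7 of Stichtenoth's book, and your argument is essentially the proof given there, including the key observation that a vanishing residue at a place where $v_{P_i}(\omega)\geqslant-1$ upgrades to $v_{P_i}(\omega)\geqslant0$.
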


The following result in \cite{S} shows the connection between $C_\mathscr{L}(D,G)$ and $C_\Omega(D,G)$.
\begin{theorem}{\rm(\cite{S} Proposition 8.1.2)}\label{D}
    For $D=\sum\limits_{i = 1}^n {{P_i}}$, let $\eta$ be a differential such that $v_{P_i}(\eta)=-1$ and $res_{P_i}(\eta)=1$ for $i=1,2,\cdots,n$. Then
    \begin{equation}
    C_\Omega(D,G)=C_\mathscr{L}(D,D-G+(\eta))\tag*{\qed}
    \end{equation}
\end{theorem}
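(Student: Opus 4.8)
The plan is to exhibit an explicit $\mathbb{F}_q$-linear isomorphism between the two function/differential spaces underlying the codes and to show that it intertwines the residue map with the evaluation map. Write $H = D - G + (\eta)$, so that the right-hand side is $C_\mathscr{L}(D,H) = \mathrm{ev}_D(\mathscr{L}(H))$. Since the module of differentials is one-dimensional over $\mathbb{F}_q(\chi)$, every differential can be written uniquely as $f\eta$ with $f \in \mathbb{F}_q(\chi)$. First I would check that the assignment $f \mapsto f\eta$ is a bijection from $\mathscr{L}(H)$ onto $\Omega_F(G-D)$: using $(f\eta) = (f) + (\eta)$, the condition $\omega = f\eta \in \Omega_F(G-D)$, namely $(f) + (\eta) \geqslant G - D$, is equivalent to $(f) + H \geqslant 0$, that is $f \in \mathscr{L}(H)$. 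This map is plainly $\mathbb{F}_q$-linear, hence an isomorphism of vector spaces.

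The crucial step is the local identity $\mathrm{res}_{P_i}(f\eta) = f(P_i)$ for every $f \in \mathscr{L}(H)$, and this is exactly where the normalization $v_{P_i}(\eta) = -1$ and $\mathrm{res}_{P_i}(\eta) = 1$ is used. I would first note that $f$ is regular at each $P_i$: evaluating the divisor inequality $(f) + H \geqslant 0$ at $P_i$ and using $\mathrm{supp}(G)\cap\mathrm{supp}(D) = \emptyset$ (so that the coefficient of $G$ at $P_i$ is $0$ and that of $D$ is $1$) together with $v_{P_i}(\eta) = -1$ gives $v_{P_i}(f) \geqslant 0$; hence $f(P_i)$ is well defined. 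Then, choosing a local uniformizer $t$ at $P_i$ and writing $\eta = g\,dt$, the hypothesis forces the local expansion $g = t^{-1} + (\text{higher order terms})$, while $f = f(P_i) + (\text{higher order terms})$. Multiplying, the coefficient of $t^{-1}$ in $fg$ is precisely $f(P_i)$, so $\mathrm{res}_{P_i}(f\eta) = f(P_i)$.

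Combining the two steps finishes the argument: as $\omega$ ranges over $\Omega_F(G-D)$, writing $\omega = f\eta$ with $f$ ranging over $\mathscr{L}(H)$, the residue vector $(\mathrm{res}_{P_1}(\omega),\ldots,\mathrm{res}_{P_n}(\omega))$ coincides with the evaluation vector $(f(P_1),\ldots,f(P_n))$. Therefore $C_\Omega(D,G) = \mathrm{ev}_D(\mathscr{L}(H)) = C_\mathscr{L}(D, D - G + (\eta))$.

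I expect the main obstacle to be the residue computation rather than the divisor bookkeeping. One must verify carefully that $f \in \mathscr{L}(H)$ really does force $f$ to be holomorphic at each $P_i$, so that $f(P_i)$ makes sense, and that the simple-pole, residue-one normalization of $\eta$ is exactly what makes the leading residue coefficient reproduce the evaluation $f(P_i)$. Any other normalization would introduce a nonzero diagonal scaling and would only yield that $C_\Omega(D,G)$ is monomially equivalent to $C_\mathscr{L}(D,H)$ rather than equal to it on the nose.
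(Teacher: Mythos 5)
Your proposal is correct, and it matches the standard argument: the paper itself offers no proof of this statement, quoting it directly from \cite{S} (Proposition 8.1.2), and the proof there proceeds exactly as you do, via the bijection $f \mapsto f\eta$ between $\mathscr{L}(D-G+(\eta))$ and the differential space together with the residue identity $\mathrm{res}_{P_i}(f\eta)=f(P_i)$, for which the normalizations $v_{P_i}(\eta)=-1$ and $\mathrm{res}_{P_i}(\eta)=1$ are precisely what is needed. Your closing remark is also on point: with a different normalization one would only obtain monomial equivalence, which is why the paper's Lemmas verify $\mathrm{res}_{Q_{i,j}}(\eta)=1$ before invoking this theorem.
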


Besides, the following result in \cite{HY} is useful in our construction.
\begin{lemma}\label{M}
    Suppose $m\mid q-1$ and $\alpha\in \mathbb{F}_q$ is an $m$-th primitive root, Then for $1\leqslant i\leqslant m$,
    \begin{equation}
    \prod\limits_{1\leqslant i\leqslant m,j \ne i} {({\alpha ^i} - {\alpha ^j})}  = m{\alpha ^{ - i}}.\tag*{\qed}
    \end{equation}
\end{lemma}

\section{Self-orthogonal and self-dual codes}
In this section, for an irreducible curve $\chi$, we always regard $\mathbb{F}_{q^2}(\chi)/\mathbb{F}_{q^2}$ as  algebraic extension of $\mathbb{F}_{q^2}(x)/\mathbb{F}_{q^2}$.
\subsection{Codes from $y^q+y=x^m$}

\qquad   For $m\mid q+1$ and $p\mid m-1$, denote $\mathbb{F}_{q^2}(\chi)$ the function field of $\chi$ over $\mathbb{F}_{q^2}$, where $\chi$
$$\chi:y^q+y=x^m.$$
Then the genus of $\chi$ is $g=\frac{1}{2}(m-1)(q-1)$. Special case of this curve with $q$ being odd power of $2$ and $m=3$ has been studied in \cite{J}, which is employed to construct quantum codes. Here we will investigate more general case.

Let $n=q(m(q-1)+1)$. Let $\{P_1,P_2,\cdots,P_{m(q-1)}\}$ be the set of rational places of $\mathbb{F}_{q^2}(x)$ corresponding to $m(q-1)$-th units roots in $\mathbb{F}_{q^2}$. Let $P_0$ be
the rational place corresponding to $0$ and let $P_\infty$ be the infinite place of $\mathbb{F}_{q^2}(x)$. In the extension $\mathbb{F}_{q^2}(\chi)/\mathbb{F}_{q^2}(x)$,  finite places $P_i$ split completely and $P_\infty$ is totally ramified. Let $Q_{i ,j}$ be all places of $\mathbb{F}_{q^2}(\chi)$ that lie over $P_i$ for $1\leq i\leq m(q-1)$ and let $Q_{\infty}$ be the infinite place lying above $P_{\infty}$. This means the number of rational places of $\mathbb{F}_{q^2}(\chi)/\mathbb{F}_q$ is at least
$N=mq(q-1)+1=2gq+q^2+1$, which meets the upper Hasse-Weil bound. Hence $\chi$ is a maximal curve.
 Denote by $D=\sum\limits_{i=0}^{m(q-1)}\sum\limits_{j=1}^q {{Q_{i,j}}}$.

\begin{lemma}\label{res}
    For $r>0$, $C_\mathscr{L}(D,rQ_\infty)^\perp=C_\mathscr{L}(D,(n+2g-2-r)Q_\infty)$.\hfill$\square$
\end{lemma}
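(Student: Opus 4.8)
The plan is to combine the general Euclidean duality $C_\mathscr{L}(D,G)^{\perp}=C_\Omega(D,G)$ from the definition of $C_\Omega(D,G)$ with Theorem \ref{D}. Taking $G=rQ_\infty$, it suffices to exhibit a single differential $\eta$ on $\mathbb{F}_{q^2}(\chi)$ satisfying $v_{Q_{i,j}}(\eta)=-1$ and $\mathrm{res}_{Q_{i,j}}(\eta)=1$ for every place $Q_{i,j}$ in $\mathrm{supp}(D)$. Theorem \ref{D} then gives $C_\Omega(D,rQ_\infty)=C_\mathscr{L}(D,D-rQ_\infty+(\eta))$, and the lemma follows once I verify the divisor identity $(\eta)=(n+2g-2)Q_\infty-D$, since in that case $D-rQ_\infty+(\eta)=(n+2g-2-r)Q_\infty$.

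My candidate is $\eta=-\dfrac{dx}{x^{m(q-1)+1}-x}$, the pullback to $\mathbb{F}_{q^2}(\chi)$ of a rational differential on $\mathbb{F}_{q^2}(x)$ whose polar locus is exactly the set of $x$-coordinates $\{0\}\cup\{\zeta:\zeta^{m(q-1)}=1\}$ of the places in $\mathrm{supp}(D)$. Writing $f(x)=x^{m(q-1)+1}-x$, the hypotheses $p\mid m-1$ and $p\mid q$ force $m(q-1)+1\equiv 0\pmod p$, so $f'(x)\equiv-1$ and every root of $f$ is simple; hence each pole of $\eta$ is simple with residue $-1/f'(\alpha)=1$. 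Because each finite $P_i$ splits completely and unramifiedly with trivial residue field extension, both the valuation $-1$ and the residue $1$ are inherited by every $Q_{i,j}$ lying above $P_i$, which verifies the two local conditions required by Theorem \ref{D}.

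It remains to pin down the global divisor of $\eta$. On $\mathbb{F}_{q^2}(x)$ one computes $(\eta)=(m(q-1)-1)P_\infty-\sum_{i}P_i$ from $(dx)=-2P_\infty$ and $(f)=\sum_i P_i-(m(q-1)+1)P_\infty$. I then transport this upward through the conorm-plus-different formula $(\eta)_{\mathbb{F}_{q^2}(\chi)}=\mathrm{Con}((\eta)_{\mathbb{F}_{q^2}(x)})+\mathrm{Diff}(\mathbb{F}_{q^2}(\chi)/\mathbb{F}_{q^2}(x))$, using $\mathrm{Con}(P_i)=\sum_j Q_{i,j}$ and $\mathrm{Con}(P_\infty)=qQ_\infty$. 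The finite part collapses to $-\sum_{i,j}Q_{i,j}=-D$, exactly as desired, while the entire $Q_\infty$ contribution concentrates at the single totally ramified place.

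The main obstacle is the coefficient of $Q_\infty$, which requires the different exponent $d(Q_\infty\mid P_\infty)$ of this generalized Artin--Schreier (elementary abelian $p$-) extension $y^q+y=x^m$. I will obtain $d(Q_\infty\mid P_\infty)=(m+1)(q-1)$ either directly from the ramification of $\wp_q(y)=y^q+y=x^m$ at the pole of $x^m$, or by back-solving Riemann--Hurwitz $2g-2=q(-2)+d(Q_\infty\mid P_\infty)$ against $g=\tfrac12(m-1)(q-1)$. Substituting, the $Q_\infty$-coefficient becomes $q(m(q-1)-1)+(m+1)(q-1)=mq(q-1)+m(q-1)-1$, and a short expansion of $n+2g-2=q(m(q-1)+1)+(m-1)(q-1)-2$ shows the two agree. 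This establishes $(\eta)=(n+2g-2)Q_\infty-D$ and hence the lemma.
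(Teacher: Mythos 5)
Your proof is correct and is essentially the paper's own argument: your differential $\eta=-\frac{dx}{x^{m(q-1)+1}-x}$ is literally the paper's $\eta=\frac{-dx}{x(x^{m(q-1)}-1)}$, and both proofs reduce the lemma to Theorem \ref{D} together with the divisor identity $(\eta)=(n+2g-2)Q_\infty-D$. You merely fill in details the paper leaves terse — computing the residues as $-1/f'(\alpha)=1$ via $m(q-1)+1\equiv 0\pmod p$ instead of invoking Lemma \ref{M}, and deriving the $Q_\infty$-coefficient explicitly from the conorm-plus-different formula with $d(Q_\infty\mid P_\infty)=(m+1)(q-1)$ rather than asserting $(dx)=(2g-2)Q_\infty$ outright — and all of these verifications check out.
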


\begin{proof}

    Consider $\eta=\frac{-dx}{x(x^{m(q-1)}-1)}$. Then it is easy to see $v_{Q_{i,j}}(\eta)=-1$ for all $i,j$. According to Lemma \ref{M}, $res_{Q_{i,j}}(\eta)=\frac{-1}{m(q-1)}=1$ for all $i,j$.
    On the other hand,
    $$
    \begin{aligned}
    \nonumber
        (\eta) &= q(m(q-1)+1)Q_\infty-D+(dx)           \\
               &= (n+2g-2)Q_\infty-D,
    \end{aligned}
    $$
    which implies $D-G+(\eta)=(n+2g-2-r)Q_\infty$. Then the conclusion follows.
\end{proof}

\begin{lemma}\label{dim} {\rm(\cite{S} Proposition 6.4.1)}
    Consider a function field $F=K(x, y)$ with
    $$y^q+\mu y=f(x)\in K[x],$$
    where $q= p^{s}> 1$ is a power of $p$ and $0\neq \mu \in K$. Assume that $\deg(f):= m > 0$ is prime to $p$, and that all roots of $T^{q}+ \mu T= 0$ are in $K$. Let $r\geq 0$.
    Then the elements $x^{i}y^{j}$ with $0\leq i$, $0\leq j\leq q-1$, $qi+mj\leq r$ form a basis of the space $\mathscr{L}(rQ_{\infty})$ over $K$.\hfill$\square$
\end{lemma}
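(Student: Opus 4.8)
The plan is to analyse $F/K(x)$ at the unique place $Q_\infty$ lying over the pole $P_\infty$ of $x$, and then to show that every function regular away from $Q_\infty$ is a polynomial in $x$ and $y$ whose monomials are controlled by their pole order at $Q_\infty$. First I would pin down the extension. Since $y$ is a root of $\phi(T)=T^q+\mu T-f(x)\in K(x)[T]$, its minimal polynomial divides $\phi$, so $[F:K(x)]\le q$. Let $Q$ be any place above $P_\infty$ with ramification index $e$, so $v_Q(x)=-e$ and $v_Q(f(x))=-em$. If $v_Q(y)\ge 0$, the relation $y^q+\mu y=f(x)$ would force $v_Q(f(x))\ge 0$, which is false; hence $v_Q(y)<0$ and $v_Q(y^q+\mu y)=q\,v_Q(y)=-em$. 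Thus $q\mid em$, and since $\gcd(m,p)=1$ gives $\gcd(m,q)=1$, we get $q\mid e$. As $e\le[F:K(x)]\le q$, this forces $e=[F:K(x)]=q$: the polynomial $\phi$ is irreducible, $P_\infty$ is totally ramified with a single rational place $Q_\infty$ above it, and $v_{Q_\infty}(x)=-q$, $v_{Q_\infty}(y)=-m$. Consequently $v_{Q_\infty}(x^iy^j)=-(qi+mj)$, so each monomial with $qi+mj\le r$ lies in $\mathscr{L}(rQ_\infty)$.

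Next I would record linear independence. If $qi+mj=qi'+mj'$ with $0\le j,j'\le q-1$, then $q\mid m(j-j')$, whence $q\mid(j-j')$ and so $j=j'$, $i=i'$; thus the listed monomials have pairwise distinct pole orders at $Q_\infty$. Elements of $F$ with distinct valuations at a single place are $K$-linearly independent, since in any vanishing combination the term of least valuation cannot be cancelled. This gives independence.

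The heart of the argument is spanning, and the main obstacle is to show that the only functions regular outside $Q_\infty$ are the polynomials in $x,y$. I would prove that $K[x,y]$ equals the integral closure $\mathcal{O}$ of $K[x]$ in $F$ by computing the discriminant of the $K(x)$-basis $\{1,y,\dots,y^{q-1}\}$: since $\phi'(T)=\mu$ in characteristic $p$, this discriminant equals $\pm N_{F/K(x)}(\mu)=\pm\mu^{q}\in K^{\times}$, a unit in $K[x]$; hence the order $K[x,y]$ is maximal, i.e. $K[x,y]=\mathcal{O}$. Because $Q_\infty$ is the only place of $F$ over $P_\infty$, the ring $\mathcal{O}$ is exactly the set of functions with no pole outside $Q_\infty$, so $\mathscr{L}(rQ_\infty)\subseteq K[x,y]$.

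Finally, take $z\in\mathscr{L}(rQ_\infty)$ and expand it in the basis as $z=\sum_{0\le j\le q-1}c_j(x)y^j$ with $c_j\in K[x]$, that is $z=\sum a_{ij}x^iy^j$ with $i\ge 0$, $0\le j\le q-1$. By the distinct-pole-order property established above, the value $v_{Q_\infty}(z)$ equals $-\max\{qi+mj:a_{ij}\ne 0\}$, since the unique monomial of largest pole order dominates the others. As $z\in\mathscr{L}(rQ_\infty)$ forces $v_{Q_\infty}(z)\ge-r$, every monomial occurring in $z$ satisfies $qi+mj\le r$. Hence these monomials span $\mathscr{L}(rQ_\infty)$, and together with their independence they form the claimed basis.
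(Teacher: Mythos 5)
Your proof is correct. Note that the paper itself gives no proof of this lemma --- it quotes it verbatim from Stichtenoth (Proposition 6.4.1) --- so the relevant comparison is with the textbook argument. The two proofs share the same skeleton for membership and independence: both establish $e(Q_\infty\mid P_\infty)=q$, $v_{Q_\infty}(x)=-q$, $v_{Q_\infty}(y)=-m$, and use $\gcd(q,m)=1$ to see the monomials $x^iy^j$, $0\leq j\leq q-1$, have pairwise distinct pole orders. The difference is in the spanning step. Stichtenoth first computes the genus $g=\frac{1}{2}(q-1)(m-1)$ and, for $r>2g-2$, matches the count of pairs $(i,j)$ with $qi+mj\leq r$ against $\ell(rQ_\infty)=r+1-g$ from Riemann--Roch; small $r$ is then handled by exactly the leading-monomial truncation you use at the end. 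You instead bypass the genus and Riemann--Roch entirely: since $\phi'(T)=\mu$ is a nonzero constant, the discriminant of $\{1,y,\dots,y^{q-1}\}$ is the unit $\pm N_{F/K(x)}(\mu)=\pm\mu^q$, so $K[x,y]$ is the full integral closure of $K[x]$ in $F$, which (as $Q_\infty$ is the unique place over $P_\infty$) equals the ring of functions regular outside $Q_\infty$; this gives $\mathscr{L}(rQ_\infty)\subseteq K[x,y]$ directly and uniformly in $r\geq 0$. Your route is more self-contained and, as a byproduct, identifies the Weierstrass semigroup at $Q_\infty$ as $\langle q,m\rangle$; the textbook route produces the genus formula along the way, which this paper needs anyway in Theorems 3 and 4. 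One small observation: you never invoke the hypothesis that all roots of $T^q+\mu T=0$ lie in $K$ --- correctly so, since that assumption controls the splitting of the finite places (used elsewhere in the paper), not the structure at $Q_\infty$ on which this lemma rests.
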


Now we can construct Euclidean self-orthogonal and self-dual codes with good parameters.
\begin{theorem}\label{mc}
    \begin{itemize}
        \item[{\rm(1).}]For $mq-m-q\leqslant r\leqslant\lfloor\frac{m(q^2-1)-1}{2}\rfloor$, the code $C_\mathscr{L}(D,rQ_\infty)$ is a Euclidean self-orthogonal code with parameters $[n,k_0,d_0]_{q^2}$ where \[n=mq^2-mq+q,\quad k_0\geqslant r-\frac{1}{2}(m-1)(q-1)+1, \quad d_0\geqslant n-r.\]
        \item[{\rm(2).}]In particular, there exist Euclidean self-dual codes with parameters $[mq^2-mq+q,\frac{mq^2-mq+q}{2},\geqslant \frac{1}{2}(mq^2+m+1)-(m-1)q]_{q^2}$ when $q$ is even.\hfill$\square$
    \end{itemize}
\end{theorem}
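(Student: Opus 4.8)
The plan is to reduce both parts entirely to the dual description in Lemma \ref{res}, combined with the dimension formula of Theorem \ref{AG} and the Riemann-Roch Theorem, so that every assertion becomes a comparison of divisors supported at the single rational place $Q_\infty$. The computation driving everything is the numerical identity
\[
n+2g-2=(mq^2-mq+q)+(mq-m-q+1)-2=m(q^2-1)-1,
\]
obtained from $g=\frac12(m-1)(q-1)$ and $n=mq^2-mq+q$. Consequently $\lfloor\frac{m(q^2-1)-1}{2}\rfloor=\lfloor\frac{n+2g-2}{2}\rfloor$, so the hypothesis $r\le\lfloor\frac{m(q^2-1)-1}{2}\rfloor$ is precisely the integer inequality $r\le n+2g-2-r$.

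For part (1), Lemma \ref{res} gives $C_\mathscr{L}(D,rQ_\infty)^\perp=C_\mathscr{L}(D,(n+2g-2-r)Q_\infty)$. Since both divisors are supported only at $Q_\infty$, the inequality $r\le n+2g-2-r$ yields $rQ_\infty\le(n+2g-2-r)Q_\infty$, hence $\mathscr{L}(rQ_\infty)\subseteq\mathscr{L}((n+2g-2-r)Q_\infty)$, and therefore $C_\mathscr{L}(D,rQ_\infty)\subseteq C_\mathscr{L}(D,rQ_\infty)^\perp$; this is the self-orthogonality. For the parameters, Theorem \ref{AG} gives $k_0=l(rQ_\infty)-l(rQ_\infty-D)$ and $d_0\ge n-\deg(rQ_\infty)=n-r$. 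As $Q_\infty$ is rational we have $\deg(rQ_\infty-D)=r-n<0$ (because $r\le\frac{n+2g-2}{2}<n$), so $l(rQ_\infty-D)=0$. The lower bound $r\ge mq-m-q$ together with $2g-2=mq-m-q-1$ forces $\deg(rQ_\infty)=r>2g-2$, whence the Riemann-Roch Theorem gives $l(rQ_\infty)=r+1-g=r-\frac12(m-1)(q-1)+1$, which is the claimed $k_0$.

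For part (2), I would specialize to $r=\frac{n+2g-2}{2}=\frac{m(q^2-1)-1}{2}$. The one point needing care is integrality: when $q$ is even, $q+1$ is odd so every divisor $m\mid q+1$ is odd, while $q^2-1$ is also odd; hence $m(q^2-1)$ is odd, $m(q^2-1)-1$ is even, and $r\in\mathbb{Z}$. This $r$ equals the upper bound of part (1) and satisfies $r\ge mq-m-q$ (equivalent to $m(q-1)^2+2q-1\ge0$), so it lies in the admissible range. With this choice $r=n+2g-2-r$, so Lemma \ref{res} gives $C_\mathscr{L}(D,rQ_\infty)^\perp=C_\mathscr{L}(D,rQ_\infty)$, i.e. the code is self-dual. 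Substituting $r$ into the formulas of part (1) yields $k_0=r+1-g=\frac{n}{2}=\frac{mq^2-mq+q}{2}$ and $d_0\ge n-r=\frac12(mq^2+m+1)-(m-1)q$.

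The main obstacle is not in this argument but is already absorbed into Lemma \ref{res}, which packages the differential $\eta=\frac{-dx}{x(x^{m(q-1)}-1)}$ and its residues; that is where the genuine geometry and the role of the hypothesis $p\mid m-1$ enter. What remains is bookkeeping, and the two places where I would be most careful are the identity $n+2g-2=m(q^2-1)-1$ (so that the stated upper bound on $r$ coincides with $\lfloor\frac{n+2g-2}{2}\rfloor$) and the parity argument pinning down exactly when $r$ is an integer, which is what restricts the self-dual construction to even $q$.
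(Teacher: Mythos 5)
Your proposal is correct and follows essentially the same route as the paper's proof: self-orthogonality from Lemma \ref{res} together with $2r\leqslant n+2g-2=m(q^2-1)-1$, the dimension via Theorem \ref{AG} and Riemann--Roch using $r>2g-2$ and $r<\deg D$, and self-duality by specializing to $2r=m(q^2-1)-1$. Your additions --- explicitly checking $l(rQ_\infty-D)=0$ and the parity argument showing $r$ is an integer exactly when $q$ is even --- are details the paper leaves implicit, and they are verified correctly.
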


\begin{proof}
\begin{itemize}
  \item[(1).] Since $2r\leqslant n+2g-2=m(q^2-1)-1$, the inclusion $C_\mathscr{L}(D,rQ_\infty)\subseteq C_\mathscr{L}(D,(n+2g-2-r)Q_\infty)$ holds. In this case, the code $C_\mathscr{L}(D,rQ_\infty)$ has length $n$. Since $2g-2<r<\mathrm{deg}\, D=mq^2-mq+q$, by Riemann-Roch Theorem its dimension($W$ is a canonical divisor)
       \[k_0=l(rQ_{\infty})-l(rQ_{\infty}-D)=l(rQ_{\infty})= r-g+1+l(W-rQ_{\infty})\geqslant r-\frac{1}{2}(m-1)(q-1)+1.\]
       Moreover, for $r>2g-2=mq-m-q-1$, the equality holds, that is, $k_0=r-\frac{1}{2}(m-1)(q-1)+1$.
       Its minimal distance is at least $n-\mathrm{deg}\, (rQ_{\infty})=n-r$.
  \item[(2).] Consider $2r={m(q^2-1)-1}$. Then $C_\mathscr{L}(D,rQ_\infty)= C_\mathscr{L}(D,(n+2g-2-r)Q_\infty)=C_\mathscr{L}(D,rQ_\infty)^{\perp}$. Then $C_\mathscr{L}(D,rQ_\infty)$ is self-dual with desired parameters.
\end{itemize}

\end{proof}

In this way, we obtain $[n, \frac{n}{2}, \geqslant \frac{1}{2}(mq^2+m+1)-(m-1)q]_{q^2}$ self-dual codes with minimal distance close to $n/2$. Moreover, some Hermitian self-orthogonal codes can be derived.

\begin{theorem}\label{coro}
    For $mq-m-q\leqslant r\leqslant m(q-1)-1$,
    \begin{itemize}
    \item[{\rm(1).}] $C_\mathscr{L}(D,rQ_\infty)$ is Hermitian self-orthogonal with parameters $[mq^2-mq+q,k_0,d_0]_{q^2}$ with
    \[k_0=r-\frac{1}{2}(m-1)(q-1)+1, \quad d_0 \geqslant n-r.\]
    \item[{\rm(2).}] there exists  $q$-ary  $[[mq^2-mq+q,k_1, d_1]]_q$ quantum code with
      \[k_1=mq^2-m-2r-1,\quad d_1\geqslant r-mq+m+q+1.\]
    \end{itemize}

\end{theorem}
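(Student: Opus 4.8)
The plan is to establish the Hermitian self-orthogonality in (1) by a residue computation, read off $k_0$ and $d_0$ from Riemann--Roch and Theorem~\ref{AG}, and then feed the resulting code into the Hermitian construction to obtain (2).

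For self-orthogonality it suffices, by sesquilinearity, to verify $\sum_{i,j} f(Q_{i,j})^{q}\,g(Q_{i,j}) = 0$ for basis elements $f = x^{i_1}y^{j_1}$ and $g = x^{i_2}y^{j_2}$ of $\mathscr{L}(rQ_\infty)$, where by Lemma~\ref{dim} the exponents satisfy $q i_1 + m j_1 \le r$, $q i_2 + m j_2 \le r$ and $0 \le j_1,j_2 \le q-1$. Since evaluation commutes with the $q$-power map, $f(Q_{i,j})^{q} = (x^{q i_1}y^{q j_1})(Q_{i,j})$, so the Hermitian pairing equals $\sum_{i,j} h(Q_{i,j})$ with $h := x^{a}y^{b}$, $a := q i_1 + i_2$, $b := q j_1 + j_2$. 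I would then reuse the differential $\eta = \tfrac{-dx}{x(x^{m(q-1)}-1)}$ of Lemma~\ref{res}, for which $v_{Q_{i,j}}(\eta) = -1$, $\mathrm{res}_{Q_{i,j}}(\eta) = 1$, and $(\eta) = (n+2g-2)Q_\infty - D$. As $a,b \ge 0$, the function $h$ is regular at every $Q_{i,j}$, so $\mathrm{res}_{Q_{i,j}}(h\eta) = h(Q_{i,j})$, and $h\eta$ has no poles outside $\mathrm{supp}(D)\cup\{Q_\infty\}$.

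The heart of the argument is the residue theorem: $\sum_{i,j} h(Q_{i,j}) = -\,\mathrm{res}_{Q_\infty}(h\eta)$, so it remains to show the right-hand side vanishes. Using $v_{Q_\infty}(x) = -q$ and $v_{Q_\infty}(y) = -m$ together with the formula for $(\eta)$,
\[
v_{Q_\infty}(h\eta) = (n+2g-2) - (qa+mb) = m(q^2-1)-1 - q(qi_1+mj_1) - (qi_2+mj_2) \ge m(q^2-1)-1-(q+1)r .
\]
When $r \le m(q-1)-1$ this is at least $q>0$, so $h\eta$ is regular at $Q_\infty$, its residue there is $0$, and the Hermitian pairing vanishes; hence $C_\mathscr{L}(D,rQ_\infty)\subseteq C_\mathscr{L}(D,rQ_\infty)^{\perp_H}$. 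For the parameters, the hypothesis $r \ge mq-m-q = 2g-1 > 2g-2$ lets Riemann--Roch give $l(rQ_\infty) = r+1-g$, while $\deg(rQ_\infty-D) = r-n < 0$ forces $l(rQ_\infty-D)=0$; thus $k_0 = r - \tfrac12(m-1)(q-1)+1$, and Theorem~\ref{AG} gives $d_0 \ge n-r$.

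For (2) I would apply the Hermitian construction to the $[n,k_0,d_0]_{q^2}$ code $C := C_\mathscr{L}(D,rQ_\infty)$, producing an $[[n,\,n-2k_0,\,\ge d^{\perp}]]_q$ quantum code, where $d^{\perp}$ is the minimum distance of $C^{\perp_H}$. A direct computation gives $n-2k_0 = (n+2g-2)-2r = mq^2-m-1-2r = k_1$. To pin down $d^{\perp}$, note $C^{\perp_H} = \overline{C^{\perp}}$, the coordinatewise $q$-th powers of the Euclidean dual; since the $q$-power Frobenius permutes $\mathrm{supp}(D)$ — here $m\mid q+1$ is precisely what guarantees $x\mapsto x^{q}$ preserves the relevant set of $x$-coordinates — the code $\overline{C^{\perp}}$ is a coordinate permutation of $C^{\perp} = C_\Omega(D,rQ_\infty)$ (Lemma~\ref{res}), hence has the same minimum distance. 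By Theorem~\ref{AG}, $d^{\perp} \ge \deg(rQ_\infty)-(2g-2) = r - mq + m + q + 1 = d_1$, which finishes the proof. The main obstacle I anticipate is the bookkeeping in the residue step — confirming that $h\eta$ acquires no unexpected poles and that the single pole-order estimate at $Q_\infty$ is sharp enough to produce exactly the threshold $r\le m(q-1)-1$ — together with carefully justifying the permutation-equivalence $\overline{C^{\perp}}\sim C^{\perp}$ that transports the distance bound from the Euclidean dual to the Hermitian dual.
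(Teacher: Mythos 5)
Your proposal is correct, but for part (1) it takes a genuinely different route from the paper. The paper argues at the level of whole codes: since $f\in\mathscr{L}(rQ_\infty)$ implies $f^q\in\mathscr{L}(qrQ_\infty)$, the coordinatewise $q$-th power satisfies $C_\mathscr{L}(D,rQ_\infty)^q\subseteq C_\mathscr{L}(D,qrQ_\infty)$, and the hypothesis $r\leqslant m(q-1)-1$ gives $qr\leqslant n+2g-2-r$, so Lemma~\ref{res} immediately yields the chain $C_\mathscr{L}(D,rQ_\infty)^q\subseteq C_\mathscr{L}(D,(n+2g-2-r)Q_\infty)=C_\mathscr{L}(D,rQ_\infty)^{\perp}$, which is exactly Hermitian self-orthogonality. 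You instead unwind the duality and verify the pairing directly via the residue theorem applied to $h\eta$ with $h=x^ay^b$; your valuation estimate $v_{Q_\infty}(h\eta)\geqslant m(q^2-1)-1-(q+1)r\geqslant q>0$ is the same numerical condition $(q+1)r\leqslant n+2g-2$ that drives the paper's inclusion, just computed by hand on monomial basis pairs (your reduction to basis elements by sesquilinearity, the regularity of $h$ away from $Q_\infty$, and the valuations $v_{Q_\infty}(x)=-q$, $v_{Q_\infty}(y)=-m$ are all sound). What your version buys is self-containedness — you use only the differential from Lemma~\ref{res}, not its duality statement; what the paper's buys is brevity, since the residue bookkeeping was already done once and for all. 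For part (2) you are in fact more careful than the paper, which merely says ``apply the Hermitian construction'': that construction bounds $d_1$ by the minimum distance of $C^{\perp_H}$, whereas Theorem~\ref{AG} bounds the distance of the Euclidean dual $C_\Omega(D,rQ_\infty)$, and you correctly bridge the gap via $C^{\perp_H}=\overline{C^{\perp}}$. One quibble there: your detour through ``$m\mid q+1$ guarantees Frobenius permutes the support, so $\overline{C^{\perp}}$ is a coordinate permutation of $C^{\perp}$'' is unnecessary, and the appeal to $m\mid q+1$ is a red herring — the coordinatewise map $c\mapsto(c_1^q,\ldots,c_n^q)$ already preserves Hamming weight because $c_i^q=0$ if and only if $c_i=0$, so $\overline{C^{\perp}}$ and $C^{\perp}$ have the same minimum distance with no permutation argument at all. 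The dimension and distance computations ($k_0=r-g+1$ from Riemann--Roch with $r\geqslant 2g-1$ and $r<n$, and $n-2k_0=mq^2-m-2r-1=k_1$, $d_1\geqslant r-mq+m+q+1$) all check out.
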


\begin{proof}
\begin{itemize}
  \item[(1).] Note that $f \in \mathscr{L}(G)$ deduces $f^q \in \mathscr{L}(qG)$. By Lemma \ref{dim}, $\mathscr{L}(rQ_\infty)$ has a basis of the form $x^iy^j$. Therefore  $C_\mathscr{L}(D,rQ_\infty)^q\subseteq C_\mathscr{L}(D,qrQ_\infty)$.
The condition $r\leqslant m(q-1)-1$ implies $qr\leq n+2g-2-r$. Hence
\[C_\mathscr{L}(D,rQ_\infty)^q \subseteq C_\mathscr{L}(D,qrQ_\infty)\subseteq C_\mathscr{L}(D,(n+2g-2-r)Q_\infty)=C_\mathscr{L}(D,rQ_\infty)^{\perp}.\]
Hence $C_\mathscr{L}(D,rQ_\infty)$ is Hermitian self-orthogonal with desired parameters.
  \item[(2).] Applying Hermitian construction to $C_\mathscr{L}(D,rQ_\infty)$ in (1) derives the quantum code.
\end{itemize}
\end{proof}

\begin{example}
    Let $q=27, m=7$ and  $r=181$.  By Theorem \ref{coro} we obtain $[4941,104,\geqslant4760]_{729}$ code which is Hermitian self-orthogonal. Using Hermitian construction, we obtain quantum code with parameters $[[4941,4733,\geqslant27]]_{27}$.
\end{example}

The following result for $q$ being odd power of $2$ and $m=3$ has been investigated in \cite{J}.
\begin{corollary}\label{coro_m}
    For odd $m$ and $q$ is power of $2$, there exists  $q$-ary  $[[mq^2-mq+q,k_1, d_1]]_q$ quantum code with $k_1=mq^2-m-2r-1,\quad d_1\geqslant r-mq+m+q+1$. Especially, let $m=3$ and $q$ is odd power of $2$, we have quantum code with parameter $[[3q^2-2q,3q^2-4-2r,\geqslant r+4-2q]]_q$.
\end{corollary}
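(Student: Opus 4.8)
The plan is to read this corollary as a direct specialization of Theorem \ref{coro} to the case $q = 2^s$, supplemented by one short number-theoretic check. Recall that throughout this subsection we assume $m \mid q+1$ and $p \mid m-1$, where $q = p^s$. When $q$ is a power of $2$ we have $p = 2$, so the standing condition $p \mid m-1$ becomes $2 \mid m-1$, which is exactly the statement that $m$ is odd. Thus the hypothesis ``odd $m$'' is merely the instantiation of $p \mid m-1$ at $p = 2$, and imposes no new constraint. Consequently the quantum code with parameters $[[mq^2-mq+q,\,mq^2-m-2r-1,\,\geqslant r-mq+m+q+1]]_q$ is precisely the output of Theorem \ref{coro}(2) over $\mathbb{F}_{q^2}$ once we restrict to $q$ even, so the first assertion follows immediately.

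For the special case $m = 3$, I would first confirm that the hypotheses of Theorem \ref{coro} are met. Since $m = 3$ is odd, the condition $p \mid m-1$ holds; it remains to verify $m \mid q+1$, i.e.\ $3 \mid q+1$, when $q$ is an \emph{odd} power of $2$. Writing $q = 2^{2k+1}$ and using $2 \equiv -1 \pmod 3$, we obtain $q \equiv (-1)^{2k+1} = -1 \pmod 3$, hence $q + 1 \equiv 0 \pmod 3$. (Even powers of $2$ instead satisfy $q \equiv 1 \pmod 3$, which is exactly why the restriction to odd powers is forced.) With both hypotheses confirmed, the construction of Theorem \ref{coro} applies for any admissible $r$ in the range $mq-m-q \leqslant r \leqslant m(q-1)-1$.

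It then remains only to substitute $m = 3$ into the three parameter formulas of Theorem \ref{coro}(2): the length becomes $mq^2-mq+q = 3q^2-2q$, the dimension becomes $k_1 = mq^2-m-2r-1 = 3q^2-4-2r$, and the designed distance becomes $d_1 \geqslant r-mq+m+q+1 = r+4-2q$. Assembling these gives the stated parameters $[[3q^2-2q,\,3q^2-4-2r,\,\geqslant r+4-2q]]_q$, recovering the construction of \cite{J} as the special subcase while extending it to general odd $m$.

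Since every structural ingredient is already supplied by Theorem \ref{coro}, there is no genuine obstacle here; the only point requiring any care is the congruence $3 \mid q+1$, which is what distinguishes odd powers of $2$ from arbitrary powers and guarantees the maximal curve $y^q+y=x^3$ is defined with the required splitting behavior. All remaining steps are routine arithmetic substitutions.
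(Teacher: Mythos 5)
Your proposal is correct and follows essentially the same route as the paper's own proof, which likewise observes that for $q=2^l$ with $l$ odd one has $2\mid m-1$ and $3\mid 2^l+1$, and then invokes Theorem \ref{coro}. Your write-up merely spells out the congruence $q\equiv -1 \pmod 3$ and the parameter substitutions in more detail than the paper does.
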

\begin{proof}
  For $q=p^l$ with $p=2$, $l$ odd,  note that $3\mid 2^l+1$ and $2\mid m-1$. The the conclusion follows from  Theorem \ref{coro}.
\end{proof}

\begin{example}
For $q=8, m=3, r=20$, the corresponding code $C_\mathscr{L}(D,rQ_\infty)$ is Hermitian self-orthogonal with parameters $[176,14,\geqslant156]_{64}$. By Hermitian construction we obtain quantum code with parameters $[[176,148,\geqslant8]]_8$.
\end{example}

\subsection{Codes from Hermitian curves}

\qquad Denote by $\mathbb{F}_{q^2}(\chi)$ the Hermitian function field of $\chi$ over $\mathbb{F}_{q^2}$ with
$$\chi:y^q+y=x^{q+1}.$$
The genus of $\chi$ is $g=\frac{q(q-1)}{2}$.

\subsubsection{From multiplicative groups}

\qquad For $s\mid q^2-1$, $p\mid s+1$ and $n=q(s+1)$, let $\{P_1,P_2,\cdots,P_s\}$ be the set of rational places of $\mathbb{F}_{q^2}(x)$ corresponding to $s$-th unit roots in
$\mathbb{F}_{q^2}$. Let $P_0$ be the rational places corresponding to $0$ and let $P_\infty$ be the infinite places of $\mathbb{F}_{q^2}(x)$. Then these finite places $P_i$ split completely and $P_\infty$ is totally ramified.  Let $Q_{i ,j}$ be all places of
$\mathbb{F}_{q^2}(\chi)$ that lie above $P_i$. Let $Q_\infty$ be the unique place in $\mathbb{F}_{q^2}(\chi)$ lying above $P_{\infty}$.  Denote by $D=\sum\limits_{i=0}^{s}\sum\limits_{j=1}^q{{Q_{i,j}}}$.

\begin{lemma}
    For $r>0$ , $C_\mathscr{L}(D,rQ_\infty)^\perp=C_\mathscr{L}(D,(n+2g-2-r)Q_\infty)$.
\end{lemma}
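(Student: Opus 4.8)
The plan is to mirror the proof of Lemma~\ref{res} line for line, the only new ingredient being the ramification behaviour of the Hermitian extension at infinity. Concretely, I would produce one differential $\eta$ on $\mathbb{F}_{q^2}(\chi)$ with $v_{Q_{i,j}}(\eta)=-1$ and $\mathrm{res}_{Q_{i,j}}(\eta)=1$ at every place $Q_{i,j}$ occurring in $D$, and whose divisor is $(\eta)=(n+2g-2)Q_\infty-D$. Once this is in hand, Theorem~\ref{D} (with $G=rQ_\infty$) gives $D-G+(\eta)=(n+2g-2-r)Q_\infty$, and since $C_\mathscr{L}(D,rQ_\infty)^\perp=C_\Omega(D,rQ_\infty)$ by Theorem~\ref{AG}, the claim $C_\mathscr{L}(D,rQ_\infty)^\perp=C_\mathscr{L}(D,(n+2g-2-r)Q_\infty)$ follows at once.

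The differential to try is
\[
\eta=\frac{-dx}{x(x^s-1)},
\]
because $x(x^s-1)=x^{s+1}-x$ has simple zeros exactly at $x=0$ and at the $s$-th roots of unity, i.e.\ precisely at the finite places $P_0,P_1,\dots,P_s$ lying below $D$. As these places split completely in $\mathbb{F}_{q^2}(\chi)/\mathbb{F}_{q^2}(x)$, each $Q_{i,j}$ is unramified over $P_i$, so $v_{Q_{i,j}}(\eta)=v_{P_i}(\eta)=-1$ and the residue at $Q_{i,j}$ coincides with the residue of $\eta$ at $P_i$. Applying Lemma~\ref{M} over $\mathbb{F}_{q^2}$ with $m=s$ (legitimate since $s\mid q^2-1$) gives residue $\tfrac{-1}{s}$ at each root of unity and $1$ at $x=0$; the hypothesis $p\mid s+1$, i.e.\ $s\equiv-1\pmod p$, is exactly what makes $\tfrac{-1}{s}=1$ in $\mathbb{F}_{q^2}$, so all residues equal $1$ as Theorem~\ref{D} demands.

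It remains to compute $(\eta)$. The function part is easy: on the rational line $\bigl(\tfrac{-1}{x(x^s-1)}\bigr)_{\mathbb{P}^1}=(s+1)P_\infty-\sum_{i=0}^sP_i$, and pulling back along the degree-$q$ extension (each finite $P_i$ splitting into $\sum_jQ_{i,j}$ and $P_\infty$ being totally ramified) yields $nQ_\infty-D$. The delicate part is the divisor of $dx$ on $\chi$: here one must add the different of $\mathbb{F}_{q^2}(\chi)/\mathbb{F}_{q^2}(x)$ to the pullback of $(dx)_{\mathbb{P}^1}=-2P_\infty$. By Riemann-Hurwitz the different is concentrated at the totally ramified place $Q_\infty$ and has degree $q^2+q-2$, so $(dx)_\chi=-2qQ_\infty+(q^2+q-2)Q_\infty=(q^2-q-2)Q_\infty=(2g-2)Q_\infty$, using $g=q(q-1)/2$. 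Summing gives $(\eta)=(n+2g-2)Q_\infty-D$, exactly as needed.

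I expect the only real obstacle to be this last bookkeeping at $Q_\infty$: one has to verify that the different exponent is $q^2+q-2$ and that it combines with the pole of $dx$ and the zero of $x^{-1}(x^s-1)^{-1}$ at infinity to land on the coefficient $n+2g-2$. This is precisely where the genus of the Hermitian curve enters; everything else (the simple-pole condition, the residue normalization, and the splitting behaviour) is routine and identical to the $y^q+y=x^m$ case once $p\mid s+1$ is used.
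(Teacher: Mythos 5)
Your proposal is correct and takes essentially the same route as the paper: the identical differential $\eta=\frac{-dx}{x(x^s-1)}$, the same residue computation via Lemma~\ref{M} (with $p\mid s+1$ forcing $-\tfrac{1}{s}=1$, and residue $1$ at $P_0$), and the same conclusion through Theorem~\ref{D}. The only difference is that you spell out the divisor computation $(\eta)=(n+2g-2)Q_\infty-D$ explicitly via the different exponent $(q-1)(q+2)=q^2+q-2$ at $Q_\infty$, a step the paper compresses into ``the remainder is similar to the proof of Lemma~\ref{res}''; your bookkeeping there is accurate.
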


\begin{proof}
    Consider $\eta=\frac{-dx}{x(x^s-1)}$. It is easy to see $v_{Q_i}(\eta)=-1$. According to Lemma \ref{M},
    $$
    res_{Q_i}(\eta)=
        \left\{
            \begin{aligned}
            \nonumber
                &-\frac{1}{s}=1        &   i \neq 0,            \\
                &1                     &   i=0.                 \\
            \end{aligned}
        \right.
    $$
    Then the remainder is  similar to the proof of Theorem \ref{res}.
\end{proof}

\begin{theorem}\label{q(q+1)}
    \begin{itemize}
        \item[{\rm(1).}] For $q^2-q-1\leq r\leqslant\lfloor\frac{1}{2}q(q+s)-1\rfloor$, the code $C_\mathscr{L}(D,rQ_\infty)$ is Euclidean self-orthogonal with parameters $[q(s+1),k_0,d_0]_{q^2}$ where $k_0=r-\frac{1}{2}q(q-1)+1$ and  $d_0\geqslant q(s+1)-r$.
        \item[{\rm(2).}] In particular, there exist Euclidean self-dual codes with parameters \[\left[q(s+1),\frac{q(s+1)}{2}, \geqslant\frac{1}{2}q(s-q+2)+1\right]_{q^2}\] when $q$ is even or  $s$ is  odd.
    \end{itemize}
\end{theorem}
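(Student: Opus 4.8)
The plan is to run the argument for Theorem \ref{mc} again, now inserting the numerical data of the Hermitian curve: $g=\tfrac{1}{2}q(q-1)$, $\deg D = n = q(s+1)$, and $2g-2 = q^2-q-2$, so that $n+2g-2 = q(q+s)-2$. The preceding lemma already supplies the decisive duality identity $C_\mathscr{L}(D,rQ_\infty)^\perp=C_\mathscr{L}(D,(n+2g-2-r)Q_\infty)$, so the whole task reduces to comparing the two divisors $rQ_\infty$ and $(n+2g-2-r)Q_\infty$ and then reading off parameters from Theorem \ref{AG} and the Riemann--Roch Theorem.

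For part (1) I would first settle self-orthogonality. Since $C_\mathscr{L}(D,G)\subseteq C_\mathscr{L}(D,G')$ whenever $G\leqslant G'$ (both being supported only at $Q_\infty$), the inclusion $C_\mathscr{L}(D,rQ_\infty)\subseteq C_\mathscr{L}(D,(n+2g-2-r)Q_\infty)=C_\mathscr{L}(D,rQ_\infty)^\perp$ holds exactly when $r\leqslant n+2g-2-r$, i.e. $2r\leqslant q(q+s)-2$; as $r$ is an integer this is precisely the stated upper bound $r\leqslant\lfloor\tfrac{1}{2}q(q+s)-1\rfloor$. For the parameters, the lower bound $r\geqslant q^2-q-1=2g-1$ guarantees $\deg(rQ_\infty)>2g-2$, so the ``especially'' clause of the Riemann--Roch Theorem gives $l(rQ_\infty)=r+1-g$; simultaneously $r<n=\deg D$ forces $\deg(rQ_\infty-D)<0$ and hence $l(rQ_\infty-D)=0$. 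Feeding these into Theorem \ref{AG} yields $k_0=l(rQ_\infty)-l(rQ_\infty-D)=r-g+1=r-\tfrac{1}{2}q(q-1)+1$, while the distance estimate $d_0\geqslant n-\deg(rQ_\infty)=q(s+1)-r$ is immediate.

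For part (2) I would specialize to the self-dual point $r=\tfrac{1}{2}(n+2g-2)=\tfrac{1}{2}q(q+s)-1$, at which $(n+2g-2-r)Q_\infty=rQ_\infty$ and therefore $C_\mathscr{L}(D,rQ_\infty)=C_\mathscr{L}(D,rQ_\infty)^\perp$ by the duality lemma. This $r$ is an integer exactly when $q(q+s)$ is even, i.e. when $q$ is even or (for $q$ odd) $s$ is odd, which is the parity hypothesis. Substituting this $r$ into the formulas of part (1) gives $k_0=\tfrac{1}{2}q(q+s)-1-\tfrac{1}{2}q(q-1)+1=\tfrac{q(s+1)}{2}=n/2$ and $d_0\geqslant n-r=\tfrac{1}{2}q(s-q+2)+1$, as claimed.

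The only real friction I anticipate is the pair of routine arithmetic checks that the machinery is being applied in a legitimate regime: that the self-dual value $r=\tfrac{1}{2}q(q+s)-1$ actually lies inside the admissible range of part (1) (equivalently $s\geqslant q-2$, which holds automatically for the relevant divisors $s$ of $q^2-1$ with $p\mid s+1$), and that $r<\deg D$ throughout the whole range so that the term $l(rQ_\infty-D)$ genuinely vanishes. Both are short inequalities in $q$ and $s$ and present no conceptual difficulty.
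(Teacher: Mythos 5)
Your proof is correct and follows essentially the same route as the paper, whose own proof just says ``similar to Theorem \ref{mc}'': the duality lemma, the inclusion $C_\mathscr{L}(D,rQ_\infty)\subseteq C_\mathscr{L}(D,(n+2g-2-r)Q_\infty)$ from $2r\leqslant n+2g-2$, Riemann--Roch plus Theorem \ref{AG} for the parameters, and $2r=n+2g-2$ for self-duality --- in fact you improve on the paper by making the parity analysis behind ``$q$ even or $s$ odd'' explicit and by correcting its evident typo ``$2r=q(k+1)-1$'' to $r=\tfrac{1}{2}q(q+s)-1$. One small caveat: your parenthetical claim that $s\geqslant q-2$ ``holds automatically'' for the admissible $s$ is false as stated (e.g.\ $q=9$, $s=2$ satisfies $s\mid q^2-1$ and $p\mid s+1$), but this is immaterial, since $s\geqslant q-2$ is precisely the condition for the range $q^2-q-1\leqslant r\leqslant\lfloor\tfrac{1}{2}q(q+s)-1\rfloor$ in part (1) to be nonempty, so the theorem is vacuous whenever that check fails.
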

\begin{proof}
    \begin{itemize}
        \item[(1).] The proof is similar to that of Theorem \ref{mc}. Denote by $n=q(s+1)$. Since $2r\leqslant n+2g-2$, the inclusion $C_\mathscr{L}(D,rQ_\infty)\subseteq C_\mathscr{L}(D,(n+2g-2-r)Q_\infty)$ holds. In this case, the code $C_\mathscr{L}(D,rQ_\infty)$ has length $n$, minimal distance at least $n-\mathrm{deg}\, (rQ_\infty)=n-r$. The dimension $k_0$ can be derived from  Theorem \ref{AG} and Riemann-Roch Theorem.
        \item[(2). ] Consider $2r={q(k+1)-1}$. Then $C_\mathscr{L}(D,rQ_\infty)= C_\mathscr{L}(D,(n+2g-2-r)Q_\infty)=C_\mathscr{L}(D,rQ_\infty)^{\perp}$. Then $C_\mathscr{L}(D,rQ_\infty)$ is self-dual with desired parameters.
    \end{itemize}
\end{proof}

\begin{comment}
In a similar way in Corollary \ref{coro_m}, some Hermitian self-orthogonal codes and corresponding quantum codes can be derived.

\begin{corollary}
    Suppose $(l-1)q<k\leqslant lq$. If $r\leqslant q+k-l-1$, then $C_\mathscr{L}(D,rQ_\infty)$ is Hermitian self-orthogonal with parameters $[q(k+1),k_0,d_0]_{q^2}$, where  $k_0\approx\frac{(r+q+1)(r+2q+2)}{2(q+1)^2}$ and $d_0\geqslant q(k+1)-r$. $\square$
\end{corollary}
\end{comment}

\subsubsection{From additive groups}

\qquad Suppose $q=p^t$. For $k\leqslant2t$, let $V_k$ be a $k$-dimension $\mathbb{F}_p$-subspace of $\mathbb{F}_{q^2}$.
Let $n=q\cdot p^k$ and let $\{P_1,P_2,\cdots,P_{p^k}\}$ be the set of rational places corresponding to the elements in $V_k$. Let $P_0$ be the rational places corresponding to $0$ and let
$P_\infty$ be the infinite places of $\mathbb{F}_{q^2}(x)$.  Similarly as the pervious case, all finite places $P_i$ split completely in $\mathbb{F}_{q^2}(\chi)$ and $P_\infty$ is totally ramified in $\mathbb{F}_{q^2}(\chi)$. Let $Q_{i ,j}$ ($Q_{\infty}$ resp.) be all places of $\mathbb{F}_{q^2}(\chi)$ that lie over $P_i$ ($P_{\infty}$ resp.). Denote by $D=\sum\limits_{i
= 1}^{p^k}\sum\limits_{j=1}^q {{Q_{i,j}}}$.

\begin{lemma}
    For $r>0$, $C_\mathscr{L}(D,rQ_\infty)^\perp=C_\mathscr{L}(D,(n+2g-2-r)Q_\infty)$.\hfill$\square$
\end{lemma}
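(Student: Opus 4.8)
The plan is to produce a single differential $\eta$ on $\mathbb{F}_{q^2}(\chi)$ having a simple pole with residue $1$ at every place in $\mathrm{supp}(D)$, and then invoke Theorem \ref{D} exactly as in the proofs of Lemma \ref{res} and its multiplicative analogue. The only genuinely new ingredient is the choice of $\eta$, which must be adapted from the cyclotomic denominator $x(x^s-1)$ to a denominator whose zero set is the additive group $V_k$. The natural candidate is the subspace polynomial
$$L(x)=\prod_{a\in V_k}(x-a),$$
a monic $\mathbb{F}_p$-linearized polynomial of degree $p^k$, and I would set $\eta=\dfrac{c\,dx}{L(x)}$ for a constant $c$ to be fixed below.

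First I would check the pole orders. Since each finite place $P_i$ (with $x=a_i$, $a_i\in V_k$) splits completely, $x-a_i$ is a separating uniformizer at every $Q_{i,j}$ above it, so $v_{Q_{i,j}}(dx)=0$, while $L$ vanishes to order exactly $1$ at $Q_{i,j}$ (only the factor $x-a_i$ contributes). Hence $v_{Q_{i,j}}(\eta)=-1$ for all $i,j$, and these are the only finite poles. For the residue, note that $\eta$ is pulled back from $\mathbb{F}_{q^2}(x)$ along unramified rational places, so $\mathrm{res}_{Q_{i,j}}(\eta)=\mathrm{res}_{P_i}(c\,dx/L(x))=c/L'(a_i)$. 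Here the role played by Lemma \ref{M} in the cyclotomic case is taken over by the defining feature of linearized polynomials: writing $L(x)=\sum_{i=0}^k c_i x^{p^i}$, differentiation in characteristic $p$ kills every term but the linear one, so $L'(x)\equiv c_0$ is a nonzero constant. Consequently $\mathrm{res}_{Q_{i,j}}(\eta)=c/c_0$ is the same at every place of $\mathrm{supp}(D)$, and choosing $c=c_0$ normalizes all residues to $1$. Unlike the multiplicative case, there is then no need to treat the place over $0$ separately, since $0\in V_k$ is handled by the same factor of $L$.

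It remains to compute $(\eta)$ and feed it into Theorem \ref{D}. Using $v_{Q_\infty}(x)=-q$ (total ramification of $P_\infty$ with $e=q$), the monic degree-$p^k$ polynomial $L(x)$ has a pole of order $qp^k=n$ at $Q_\infty$, so $(L(x))=D-nQ_\infty$. For the Hermitian curve the different at the unique ramified place is $(q-1)(q+2)Q_\infty$ (equivalently, Riemann--Hurwitz together with $g=\tfrac{q(q-1)}{2}$ forces this), so that $(dx)=(dx)_{\mathbb{F}_{q^2}(x)}+\mathrm{Diff}=(-2q+(q-1)(q+2))Q_\infty=(2g-2)Q_\infty$ is supported only at infinity. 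Combining, $(\eta)=(dx)-(L(x))=(n+2g-2)Q_\infty-D$. Then Theorem \ref{D} with $G=rQ_\infty$ gives $C_\Omega(D,rQ_\infty)=C_\mathscr{L}(D,\,D-rQ_\infty+(\eta))=C_\mathscr{L}(D,(n+2g-2-r)Q_\infty)$, and since $C_\Omega(D,G)=C_\mathscr{L}(D,G)^\perp$ the claim follows.

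I expect the main obstacle to be the residue normalization rather than the divisor bookkeeping: one must ensure a single constant $c$ works simultaneously for all $qp^k$ places of $\mathrm{supp}(D)$. Uniformity across the $q$ places above a fixed $P_i$ is automatic because $\eta$ is pulled back from the rational subfield and these places are unramified and rational; uniformity across the $p^k$ distinct $a_i$ is exactly the constancy of $L'$, which is the precise point at which the additive (rather than multiplicative) structure of $V_k$ enters. The identity $(dx)=(2g-2)Q_\infty$ is then routine once the different of the Hermitian extension is recorded.
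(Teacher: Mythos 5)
Your proof is correct and takes essentially the same route as the paper's: both choose $\eta$ to be a scalar multiple of $dx/\prod_{a\in V_k}(x-a)$, check simple poles with a common residue at all places of $\mathrm{supp}(D)$, normalize that residue to $1$, and conclude via Theorem \ref{D}. The only (minor) difference is that you obtain constancy of the residues from the linearized-polynomial identity $L'(x)\equiv c_0\neq 0$, whereas the paper invokes translation-stability of $V_k$ — two faces of the same additive structure — and you additionally spell out the divisor computation $(dx)=(2g-2)Q_\infty$ via the different, which the paper leaves implicit.
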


\begin{proof}
    Consider $h(x)=\prod\limits_{\alpha  \in V_k} {(x - \alpha )}$. Then  $res_{Q_{i,j}}(h(x))$ is constant for all $Q_{i,j}$ that lie over $P_i$ with $1\leqslant i\leqslant n$ since
    $V$ is stable under translation by $\alpha\in V_k$. Denote by $\lambda=res_{Q_{i,j}}(h(x))$. Let $\eta=\frac{\lambda dx}{h(x)}$. Then
    $(\eta)=(n+2g-2)Q_{\infty}-D$. By Theorem \ref{D},
    \[C_\mathscr{L}(D,rQ_\infty)^\perp=C_\Omega(D,rQ_\infty)=C_\mathscr{L}(D,D-rQ_\infty+(\eta))=C_\mathscr{L}(D,(n+2g-2-r)Q_\infty).\]
\end{proof}

\begin{theorem}
    \begin{itemize}
        \item[{\rm(1).}] For $q^2-q-1\leq r\leqslant\lfloor\frac{q(p^k+q-1)}{2}-1\rfloor$, $C_\mathscr{L}(D,rQ_\infty)$ are Euclidean self-orthogonal codes whose parameters are $[q p^k,k_0,d_0]_{q^2}$ with
            \[k_0=r-\frac{1}{2}q(q-1)+1, \quad d_0\geqslant  p^kq-r.\]
        \item[{\rm(2).}] In particular, there exist Euclidean self-dual codes with parameters $[p^kq,\frac{p^kq}{2},\geqslant\frac{ p^kq}{2}-\frac{q(q+1)}{2}+1]_{q^2}$ when $q$ is even.\hfill$\square$
    \end{itemize}
\end{theorem}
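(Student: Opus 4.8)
The plan is to follow the template of Theorem~\ref{q(q+1)}, reducing the whole statement to the duality identity proved in the lemma immediately preceding this theorem, namely $C_\mathscr{L}(D,rQ_\infty)^\perp=C_\mathscr{L}(D,(n+2g-2-r)Q_\infty)$. First I would record the numerics for this curve: $n=\deg D=p^kq$ and $g=\tfrac{q(q-1)}{2}$, so that $2g-2=q^2-q-2$ and $n+2g-2=q(p^k+q-1)-2$. Because $\mathrm{supp}(rQ_\infty)=\{Q_\infty\}$ is disjoint from $\mathrm{supp}(D)$, the two codes are built from the nested Riemann--Roch spaces $\mathscr{L}(rQ_\infty)\subseteq\mathscr{L}((n+2g-2-r)Q_\infty)$ under the \emph{same} evaluation map, and this nesting holds precisely when $rQ_\infty\leqslant(n+2g-2-r)Q_\infty$, i.e.\ when $2r\leqslant n+2g-2$. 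By the lemma this inclusion is exactly $C_\mathscr{L}(D,rQ_\infty)\subseteq C_\mathscr{L}(D,rQ_\infty)^\perp$, so Euclidean self-orthogonality holds iff $r\leqslant\tfrac12(n+2g-2)=\tfrac{q(p^k+q-1)}{2}-1$, which (as $r$ is an integer) is the stated upper bound $r\leqslant\lfloor\tfrac{q(p^k+q-1)}{2}-1\rfloor$.

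Next, for the parameters in (1), I would apply Theorem~\ref{AG} together with the Riemann--Roch Theorem. The hypothesis $r\geqslant q^2-q-1$ forces $r>2g-2$; moreover the admissible range is nonempty only when $p^k\geqslant q-1$, and then $r\leqslant\tfrac{q(p^k+q-1)}{2}-1<p^kq=n$, so $\deg(rQ_\infty-D)=r-n<0$ and hence $l(rQ_\infty-D)=0$. Riemann--Roch then collapses to $l(rQ_\infty)=\deg(rQ_\infty)+1-g=r+1-g$, giving $k_0=l(rQ_\infty)-l(rQ_\infty-D)=r-\tfrac12 q(q-1)+1$, while the designed-distance estimate of Theorem~\ref{AG} yields $d_0\geqslant n-\deg(rQ_\infty)=p^kq-r$.

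For (2) I would specialize to the self-dual point $2r=n+2g-2$, i.e.\ $r=\tfrac12\bigl(q(p^k+q-1)-2\bigr)$; there the lemma gives $C_\mathscr{L}(D,rQ_\infty)=C_\mathscr{L}(D,rQ_\infty)^\perp$. The only constraint is that this $r$ be an integer, i.e.\ that $q(p^k+q-1)$ be even. When $q$ is even this is automatic; when $q$ is odd, $p$ is odd so $p^k$ is odd and $q(p^k+q-1)$ is odd, so no self-dual code arises this way --- which is exactly why the hypothesis ``$q$ even'' appears. Substituting this $r$ into the formulas of (1) gives $k_0=r+1-g=\tfrac{n}{2}=\tfrac{p^kq}{2}$ and $d_0\geqslant n-r=\tfrac{n}{2}-g+1=\tfrac{p^kq}{2}-\tfrac{q(q-1)}{2}+1$, which in particular dominates the claimed bound.

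Everything here is routine once the duality lemma is available; the only steps needing care are the two non-degeneracy checks ($2g-2<r<n$, which make Riemann--Roch give $l(rQ_\infty)=r+1-g$ and $l(rQ_\infty-D)=0$) and the parity argument isolating the ``$q$ even'' hypothesis in (2). I do not expect a genuine obstacle, since the substantive work --- exhibiting the differential $\eta=\tfrac{\lambda\,dx}{h(x)}$ with $(\eta)=(n+2g-2)Q_\infty-D$, using that $V_k$ is stable under translation so that $h(x)=\prod_{\alpha\in V_k}(x-\alpha)$ has a common residue at the places over every $P_i$ --- has already been discharged in the preceding lemma.
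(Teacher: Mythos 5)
Your proof is correct and takes essentially the same route as the paper, whose entire proof of this theorem is the remark that it is ``similar to that of Theorem \ref{q(q+1)} by replacing $n$ with $q\cdot p^k$'': the same duality lemma, the same inclusion criterion $2r\leqslant n+2g-2$, and the same Riemann--Roch computation of $k_0$ and $d_0$, with your version usefully filling in the non-degeneracy checks ($p^k\geqslant q-1$, $2g-2<r<n$) and the parity argument for the hypothesis ``$q$ even'' that the paper leaves implicit. Your observation that the self-dual minimum distance actually comes out as $\frac{p^kq}{2}-\frac{q(q-1)}{2}+1$, which strictly dominates the stated $\frac{p^kq}{2}-\frac{q(q+1)}{2}+1$, is also right --- the paper's printed bound appears to be a typo (the analogous bound in Theorem \ref{q(q+1)} equals $\frac{n}{2}-g+1$, matching your computation).
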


\begin{proof}
    The proof is similar to that of Theorem \ref{q(q+1)} by replacing $n$ with $q\cdot p^k$.
\end{proof}

\begin{comment}
\begin{corollary}
    If $r\leqslant p^k+q-2-\lceil\frac{p^k-2}{q+1}\rceil$, $C_\mathscr{L}(D,rQ_\infty)$ is Hermitian self-orthogonal with parameters $[q\cdot p^k,k_0,d_0]_{q^2}$ with $k_0\approx\frac{r^2}{2q(q+1)}+\frac{2r}{q+1}+\frac{r}{q}-2$ and $d_0\geqslant q\cdot p^k-r$.\hfil$\square$
\end{corollary}
\end{comment}

Summarizing it up, we obtain self-orthogonal codes and self-dual codes with $k_0 + d_0$ being close to the code length $n$ if $k$ is close to $2t$. Therefore, these self-orthogonal and self-dual codes are not far from the Singleton bound while their code lengths are much longer than MDS codes.

\section{Conclusion and further study}

\qquad In this paper, we construct self-orthogonal codes over function fields which are Artin-Schreier extensions of $\mathbb{F}_{q^2}(x)$. These codes have both good parameters and Euclidean/Hermitian self-orthogonal(self-dual) properties.  Moreover, Kummer extension has abundant structure as well as Artin-Schreier extension. Therefore, it may be possible to construct self-orthogonal codes with better parameters from Kummer extension as well.

\begin{comment}
\section{Acknowledgements}
The authors are supported by National Natural Science Foundation of China
(Nos. 12171191, 12271199) and by self-determined research funds of CCNU from the colleges�� basic research and operation of MOE CCNU22JC001.
\end{comment}

\newpage

\bibliographystyle{IEEEtran}
\bibliography{IEEEexample}

\end{document}